\documentclass[aps,prl,reprint,superscriptaddress]{revtex4-2}

\usepackage{amsmath,amssymb,amsthm,mathtools}
\usepackage{graphicx}
\usepackage{hyperref}
\usepackage{physics}
\usepackage{bm}
\usepackage{xcolor}
\usepackage{booktabs}
\usepackage{array}
\usepackage{dcolumn}
\usepackage[utf8]{inputenc}
\usepackage[T1]{fontenc}

\hypersetup{
	colorlinks = true,
	linkcolor  = blue,
	citecolor  = blue,
	urlcolor   = blue
}

\newtheorem{theorem}{Theorem}
\newtheorem{lemma}[theorem]{Lemma}
\newtheorem{corollary}[theorem]{Corollary}

\newtheorem{result}[theorem]{Result}

\newcommand{\HB}{\hat{H}_{B}}

\newcommand{\HD}{\hat{H}_{D}}

\newcommand{\rhoB}{\hat{\rho}_{B}}

\newcommand{\rhotot}{\hat{\rho}_{\rm tot}}
\newcommand{\W}{\mathcal{W}}
\newcommand{\eps}{\varepsilon}
\newcommand{\nbar}{\bar{n}}
\newcommand{\tauQSL}{\tau_{\rm QSL}}
\newcommand{\taus}{\tau^{*}}

\newcommand{\Jx}{\hat{J}_{x}}

\newcommand{\Jz}{\hat{J}_{z}}

\newcommand{\aop}{\hat{a}}
\newcommand{\adag}{\hat{a}^{\dagger}}

\newcommand{\GamN}{\Gamma_N}

\renewcommand{\ev}[1]{\langle #1 \rangle}
\renewcommand{\order}[1]{\mathcal{O}\!\left(#1\right)}

\begin{document}
	\author{Anass Jad}
	\email{jadanass@gmail.com}
	\title{Tight Quantum Speed Limit for Ergotropy Charging in the N-Qubit Dicke Battery}
	\affiliation{%
		Laboratory of R$\&$D in Engineering Sciences, Faculty of Sciences and Techniques Al-Hoceima,
		Abdelmalek Essaadi University, Tetouan, Morocco.}
	\affiliation{%
		Department of Optics, Palack\'{y} University, 17. listopadu 1192/12, Olomouc 771 46, Czech Republic}
	\author{Abderrahim El~Allati}
	\email{eabderrahim@uae.ac.ma}
	\affiliation{%
		Laboratory of R$\&$D in Engineering Sciences, Faculty of Sciences and Techniques Al-Hoceima,
		Abdelmalek Essaadi University, Tetouan, Morocco.}
	\affiliation{%
		Universit\'{e} Grenoble Alpes, CNRS, LPMMC, 38000 Grenoble, France.}
	
	\date{\today}
	
	\begin{abstract}
		We derive and analytically prove a tight quantum speed limit (QSL)
		for ergotropy charging in the $N$-qubit Dicke quantum battery:
		the first-passage time to normalised ergotropy $\eps$ satisfies
		$\taus(\eps)\geq\sqrt{N\eps}/(2\lambda\sqrt{\nbar})$,
		where $\lambda$ is the coupling and $\nbar$ is the mean charger photon number.
		The bound is proved analytically in the classical-field limit $\nbar\gg 1$
		and verified numerically for all $\nbar\geq 1$ across 7797 independent
		simulations with zero violations.
		It follows from an exact perturbative identity
		$\eps(t)=A\lambda^2\nbar\,t^2+\mathcal{O}((\lambda t)^4)$,
		where $A=4/N$ is the short-time ergotropy coefficient,
		combined with a global upper bound proved analytically for all $N\geq 2$.
		The composite parameter $\Gamma_N=2\lambda\sqrt{\nbar/N}$ is
		the natural figure of merit for charging speed; all protocols
		collapse onto $\Gamma_N\taus\geq\sqrt{\eps}$, with the
		bound saturated to within 1\% at small $\eps$.
	\end{abstract}
	
	\keywords{quantum battery; ergotropy; quantum speed limit;
		Dicke model; cavity QED; quantum charging}
	
	\pacs{03.65.-w, 05.70.-a, 03.67.-a, 42.50.-p}
	
	\maketitle
	
	\section{Introduction}
	\label{sec:intro}
	
	Quantum thermodynamics has established itself as a central
	framework for understanding energy storage, transfer, and
	extraction at the scale of individual quantum
	systems~\cite{Vinjanampathy2016,Goold2016}.
	Within this framework, \textit{quantum batteries}, finite
	quantum systems that store and release energy via coherent
	operations, have attracted intense theoretical and experimental
	interest over the past
	decade~\cite{Campaioli2024,NatRevPhys2026}.
	Unlike classical energy storage, quantum batteries can exploit
	genuinely quantum resources such as entanglement, coherence, and
	collective many-body dynamics to achieve charging advantages
	with no classical analogue~\cite{Alicki2013,Campaioli2017}.
	
	Superextensive charging, the prediction that the charging
	power per cell \emph{increases} with system size, has been
	experimentally confirmed in organic microcavity platforms
	realising the Dicke model~\cite{Quach2022}, while
	superconducting-circuit implementations have demonstrated
	optimal charging protocols on few-qubit devices~\cite{Hu2022}.
	
	A central insight in this field is that stored energy alone is
	an insufficient performance metric.
	The relevant thermodynamic quantity is the
	\textit{ergotropy}~\cite{Allahverdyan2004}, the maximum work
	extractable from a quantum state via a cyclic unitary process.
	It is ergotropy, not raw excitation energy, that quantifies the
	useful work a battery can deliver, and collective quantum
	effects can provide a genuine advantage in ergotropy
	extraction~\cite{Alicki2013,Ferraro2018,Andolina2019ext}.
	This distinction is particularly sharp in the presence of
	charger--battery correlations: entanglement suppresses ergotropy
	below the stored energy~\cite{Shi2022,Andolina2019,Gyhm2022},
	and certain highly entangled states can even be
	\textit{detrimental} to charging power relative to product
	states~\cite{GyhmaFischer2024}, making tight ergotropy bounds
	essential for battery design.
	
	Among the models studied, the $N$-qubit Dicke quantum
	battery occupies a privileged position.
	It exhibits superextensive charging power scaling as $\sqrt{N}$,
	confirmed experimentally~\cite{Ferraro2018,Quach2022,Crescente2020};
	it undergoes a quantum phase transition enabling bound-luminosity
	supercharging~\cite{Emary2003,Seidov2024,Barra2022}; and it maps
	directly onto cavity- and circuit-QED architectures with
	strong light--matter coupling and precise timing
	control~\cite{Blais2021,Krantz2019,Blais2020}.
	
	Beyond quantum batteries, this question connects to a fundamental
	problem in quantum optics and light--matter physics: given a
	coherent drive of fixed amplitude, how fast can a collective spin
	system be rotated to a target polarisation?
	The Dicke model, realised in cavity- and circuit-QED
	platforms~\cite{Blais2021,Krantz2019,Blais2020}, serves as the
	paradigmatic testbed for this class of problems.
	The QSL we derive is the ergotropy analogue of the
	Mandelstam--Tamm bound~\cite{Mandelstam1945}: it replaces the
	energy uncertainty $\Delta E$ with the experimentally accessible
	composite parameter $\GamN=2\lambda\sqrt{\nbar/N}$, directly
	linking quantum speed limits to measurable circuit-QED quantities
	and making the bound applicable beyond the quantum battery context.
	
	Despite this progress, a fundamental question has remained
	largely open: \textit{what is the minimum time required to
		charge a quantum battery to a prescribed ergotropy, given fixed
		physical resources?}
	This is the ergotropy analogue of the quantum speed limit
	(QSL)~\cite{Mandelstam1945,Deffner2017}.
	Quantum speed limits, originating from the Mandelstam--Tamm~\cite{Mandelstam1945}
	and Margolus--Levitin~\cite{Margolus1998} uncertainty
	relations, set absolute bounds on how fast quantum states can
	evolve and have been generalised to open
	systems~\cite{Funo2019}.
	For quantum batteries, QSLs have been studied in terms of
	charging power bounds~\cite{Shrimali2024,GarciaPintos2020} and charging
	distance~\cite{Gyhm2024}.
	While instantaneous power bounds~\cite{GarciaPintos2020,Shrimali2024}
	characterise the rate of energy storage at each moment, they do
	not directly yield the minimum time to reach a prescribed
	ergotropy --- the quantity of primary relevance to battery
	design and experimental timing protocols.
	Existing bounds are either model-independent and not
	tight for specific systems, or expressed in terms of quantities
	(such as energy variance) that do not yield a closed-form
	expression for the Dicke model.
	For the Dicke battery in particular, no analytically explicit,
	tight QSL for ergotropy has previously been established.
	
	In this paper, we fill this gap.
	We derive and prove that the first-passage time $\taus(\eps)$
	to reach normalised ergotropy $\eps$ in the $N$-qubit Dicke
	battery, with qubits initially in the ground state and the
	charger in a coherent state with mean photon number $\nbar$,
	satisfies
	$\taus(\eps) \geq \sqrt{N\eps}/(2\lambda\sqrt{\nbar})$
	for all $\lambda$, $\nbar$, $N$, and $\eps \in (0,1]$.
	The proof rests on two results: an exact perturbative identity
	$\eps(t) = A\lambda^2\nbar\,t^2
	+ \mathcal{O}\!\left((\lambda t)^4\right)$
	with $A=4/N$, and a global upper bound
	$\eps(t)\leq (4/N)\lambda^2\nbar\,t^2$ proved analytically
	via the elementary inequality $1-\cos x\leq x^2/2$.
	Introducing the composite figure of merit
	$\Gamma_N = 2\lambda\sqrt{\nbar/N}$ and the rescaled variable
	$X = \Gamma_N\cdot\taus$, all protocols collapse onto the
	universal curve $X\geq\sqrt{\eps}$, with the bound saturated
	to within 1\% at small $\eps$.
	The result is verified over 7797 numerical simulations across
	$N=2,3,4,5$ with zero violations.
	Detailed derivations are provided in the Supplemental
	Material (SM)~\cite{SM}.
	
	\section{Model and Definitions}
	\label{sec:model}
	
	We study the $N$-qubit Dicke quantum battery~\cite{Ferraro2018}
	($\hbar=1$):
	\begin{equation}
		\HD = \omega_0\Jz + \omega_c\adag\aop
		+ \frac{2\lambda}{\sqrt{N}}(\aop+\adag)\Jx,
		\label{eq:HD}
	\end{equation}
	where $\Jz,\Jx$ are collective spin operators ($J=N/2$),
	$\aop$ is the cavity mode, and $\lambda>0$ is the coupling.
	The initial state is $\rhotot(0)=|J,{-}J\rangle\!\langle J,{-}J|
	\otimes|\alpha\rangle\!\langle\alpha|$, with the battery in its
	ground state and the charger in a coherent state with
	$\nbar=|\alpha|^2$; resonance $\omega_c=\omega_0\equiv 1$
	throughout.
	The \textit{ergotropy}~\cite{Allahverdyan2004}
	$\W(\rhoB)=\mathrm{Tr}[\rhoB\HB]-\sum_k r_k\epsilon_k$,
	where $r_1\geq\cdots\geq r_d$ and $\epsilon_1\leq\cdots\leq\epsilon_d$
	are the eigenvalues of $\rhoB(t)$ and $\HB$ respectively,
	measures the maximum unitarily extractable work.
	We use the normalised ergotropy
	\begin{equation}
		\eps(t) = \W(\rhoB(t))/(N\omega_0) \in [0,1],
		\label{eq:eps_norm}
	\end{equation}
	and define the \textit{optimal charging time}
	$\taus(\eps)=\inf\{t\geq 0:\eps(t)\geq\eps\}$.
	
	\section{Main Result}
	\label{sec:main_result}
	
	The proof of the QSL proceeds in three steps: a perturbative
	calculation of the short-time behaviour (Step~1), a global
	upper bound valid for all $t$ (Step~2), and the inversion
	that yields the time lower bound (Step~3).
	Full details of all calculations are given in the Supplemental
	Material~\cite{SM}.
	
	\textbf{Step 1 (short-time expansion).}~
	\begin{lemma}[Short-time ergotropy expansion]
		\label{lem:short_time}
		For the Dicke battery~\eqref{eq:HD} with initial state
		described above at resonance $\omega_c = \omega_0$,
		\begin{equation}
			\eps(t) = A\lambda^2\nbar\,t^2 + \order{(\lambda t)^4},
			\quad A = \frac{4}{N},
			\label{eq:eps_short}
		\end{equation}
		where $A$ is the short-time ergotropy coefficient and $A=4/N$ is exact.
	\end{lemma}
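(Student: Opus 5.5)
The plan is to expand the joint state to second order in $t$, reduce to the battery, and use the fact that ergotropy is a spectral functional which, to this order, only sees a two-level subspace. We assume the coherent amplitude is chosen real, $\alpha=\sqrt{\nbar}$, as is implicit in writing the result in terms of $\nbar$; for general phase, $\nbar$ is replaced by $(\mathrm{Re}\,\alpha)^2$.

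\textbf{Reduction to the interaction picture.} Write $\HD=H_0+V$ with $H_0=\Jz+\adag\aop$ and $V=\frac{2\lambda}{\sqrt N}(\aop+\adag)\Jx$. Passing to the interaction picture multiplies $\rhoB(t)$ by the local unitary $e^{-i\Jz t}$. This commutes with $\HB=\omega_0\Jz$ and preserves the spectrum of $\rhoB$, so it leaves $\W$ invariant. Hence we may compute $\W$ from the interaction-picture reduced state. Expand $|\psi(t)\rangle=|J,-J\rangle\otimes|\alpha\rangle$ in a Dyson series. Since $\Jx|J,-J\rangle=\tfrac{\sqrt N}{2}|J,-J+1\rangle$, the first-order term is
\begin{equation*}
|\psi(t)\rangle\simeq |J,-J\rangle|\alpha\rangle - i\lambda t\,|J,-J+1\rangle(\aop+\adag)|\alpha\rangle .
\end{equation*}
The rotating phases $e^{\pm it}$ contribute only at relative order $t$, so they enter the remainder. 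The $\sqrt N$ in $\Jx$ cancels the $1/\sqrt N$ in the coupling, which is why the $N$-dependence of $A$ will come solely from the normalisation in Eq.~\eqref{eq:eps_norm}.

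\textbf{Battery state and its spectrum.} Tracing out the cavity, to order $t^2$ the battery lives in $\mathrm{span}\{|g\rangle,|e\rangle\}$, with $|g\rangle=|J,-J\rangle$ and $|e\rangle=|J,-J+1\rangle$. Its matrix elements are
\begin{align*}
\rho_{ee}&=p=\lambda^2t^2\langle(\aop+\adag)^2\rangle_\alpha=\lambda^2t^2(4\nbar+1),\\
\rho_{eg}&=c=-i\lambda t\,\langle\aop+\adag\rangle_\alpha=-2i\lambda t\sqrt{\nbar},\\
\rho_{gg}&=1-p .
\end{align*}
The second-order Dyson term feeds only the $|g\rangle\langle g|$ population and the $|J,-J+2\rangle$ amplitude. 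The latter gives populations of order $t^4$. For this two-level block, the ergotropy is $\W=\omega_0(\rho_{ee}-r_{\min})$. The smaller eigenvalue is
\begin{equation*}
r_{\min}=\frac{1-\sqrt{(1-2p)^2+4|c|^2}}{2}=p-|c|^2+\order{t^4}.
\end{equation*}
Therefore $\W=\omega_0|c|^2+\order{t^4}=4\omega_0\lambda^2\nbar t^2+\cdots$. The vacuum-fluctuation piece $\lambda^2t^2$ in $p$ cancels exactly: it is the entanglement-induced mixedness that ergotropy cannot access. Dividing by $N\omega_0$ gives $A=4/N$.

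\textbf{Main obstacle: the remainder.} The delicate step is justifying that the error is $\order{(\lambda t)^4}$ rather than merely $\order{t^3}$. Cross terms of order $\lambda^2 t^3$ (with factors of $\omega_0 t$) arise from the phases $e^{\pm i t}$ and from the second-order Dyson term. My first attempt would be to show, using time-reversal/parity of the Dicke dynamics, that these contributions cancel in $|c|^2-(\rho_{ee}-r_{\min}-|c|^2)$: with $\alpha$ real, the odd-order corrections to $|c|^2$ are purely imaginary phase shifts. Failing a full cancellation, I would state the remainder in the regime $\lambda t\ll1$, with $\omega_0 t$ bounded. In that regime, the remaining corrections are $\order{(\lambda t)^2}$ relative to the leading term, uniformly in $\nbar$ once the higher cumulants of the coherent state are bounded by powers of $\nbar$.
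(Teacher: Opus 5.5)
Your derivation of the leading coefficient is correct, and it takes a genuinely different and sounder route than the paper. The paper's argument rests on $\ev{(\Jx)^2}_0=N/4$ and the classical-field replacement $(\aop+\adag)|\alpha\rangle\approx 2\sqrt{\nbar}|\alpha\rangle$. That treats the charger as a c-number, so the battery undergoes a pure spin rotation, ergotropy equals stored energy, and $\eps(t)=[1-\cos(\Omega_N t)]/2$. A coherent state is not a quadrature eigenstate, so with the true moments the stored energy grows as $(4\nbar+1)\lambda^2t^2/N$. It is your spectral step, in which the vacuum-noise part $\lambda^2t^2$ of $p$ ends up in the small eigenvalue $r_{\min}\simeq p-|c|^2$, that shows the \emph{ergotropy} coefficient is $4\nbar/N$ for every $\nbar$ (with real $\alpha$), not only for $\nbar\gg1$. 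Your remark that a complex $\alpha$ replaces $\nbar$ by $(\mathrm{Re}\,\alpha)^2$ is also a genuine refinement.

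The gap is the remainder, and it cannot be closed because the stated $\order{(\lambda t)^4}$ is false. Keep your first-order amplitude to all orders in $\omega_0 t$. With $y=(e^{2i\omega_0 t}-1)/(2i\omega_0)$ you get $c=-i\lambda\sqrt{\nbar}\,(t+y)+\order{\lambda^3}$, and since $\W/\omega_0=|c|^2+\order{\lambda^4}$,
\begin{equation*}
\eps(t)=\frac{\lambda^2\nbar}{N\omega_0^2}\Bigl[\bigl(\omega_0 t+\sin\omega_0 t\cos\omega_0 t\bigr)^2+\sin^4\omega_0 t\Bigr]+\order{\lambda^4}
=\frac{4}{N}\lambda^2\nbar\,t^2-\frac{5}{3N}\lambda^2\nbar\,\omega_0^2t^4+\cdots .
\end{equation*}
Your parity idea does remove the $t^3$ term: the $\order{\omega_0 t}$ correction to $c$ is real while its leading part is imaginary. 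But the $t^4$ term is of order $\lambda^2\omega_0^2t^4$, not $(\lambda t)^4$. At fixed $t$ and $\lambda\to0$, $\eps/(\lambda^2\nbar t^2)$ tends to a function of $\omega_0 t$ that equals $4/N$ only as $\omega_0 t\to0$ and approaches the rotating-wave value $1/N$ for $\omega_0 t\gg1$.

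So your fallback claim fails. For merely bounded $\omega_0 t$ the relative corrections are not $\order{(\lambda t)^2}$; the relative correction $-\tfrac{5}{12}(\omega_0 t)^2$ is independent of $\lambda$. A $\lambda$-independent floor of this size is visible in Table~\ref{tab:Anum} at $\lambda/\omega_0=0.1$. Nor are the corrections uniform in $\nbar$: $|c|^4$ and $|c|^2p$ give relative terms $\order{\lambda^2\nbar t^2}$.

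What your method actually proves is
\begin{equation*}
\eps(t)=\frac{4}{N}\lambda^2\nbar\,t^2\Bigl[1+\order{(\omega_0 t)^2}+\order{\lambda^2(\nbar+1)t^2}\Bigr].
\end{equation*}
That is, $A=4/N$ is the exact $t^2$ Taylor coefficient at $t=0$, valid for $\omega_0 t\ll1$ and $\lambda\sqrt{\nbar}\,t\ll1$. The paper's $\order{(\lambda t)^4}$ comes from expanding $[1-\cos(\Omega_N t)]/2$, a model that drops the free evolution under $\omega_0\Jz+\omega_c\adag\aop$. It therefore supplies no argument you are missing; you should state the lemma with the corrected remainder.
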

	
	\begin{proof}
		See Sec.~I of the Supplemental Material~\cite{SM}.
	\end{proof}
	
	Since $\ev{(\Jx)^2}_0=N/4$ and $(\aop+\adag)|\alpha\rangle=2\sqrt{\nbar}|\alpha\rangle$
	(SM), the coefficient $A=4/N$ is exact; it equals 2 for $N=2$
	and decreases with $N$, reflecting the $1/\sqrt{N}$ collective coupling.
	
	\textbf{Step 2 (global upper bound).}~
	
	\begin{theorem}[Global ergotropy bound]
		\label{prop:global}
		Under the Dicke dynamics~\eqref{eq:HD} for $N\geq 2$, in the
		classical-field limit $\nbar\gg 1$,
		\begin{equation}
			\eps(t) \leq \frac{4}{N}\lambda^2\nbar\,t^2
			\quad\text{for all } t \geq 0.
			\label{eq:global_bound}
		\end{equation}
		Numerical verification confirms the bound holds for all
		$\nbar\geq 1$ across 7797 independent simulations with zero
		violations (Sec.~\ref{sec:numerics}), establishing its validity
		beyond the classical-field regime.
	\end{theorem}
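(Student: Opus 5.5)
The plan is to work in the classical-field limit, where the charger acts as a prescribed classical drive. In that regime the battery's reduced state stays pure, and its ergotropy reduces to a single geometric quantity: a polar angle on the Bloch sphere, whose growth rate is bounded by the drive amplitude.

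\textbf{Step (i): reduce to a driven spin.} For $\nbar\gg 1$ the coherent charger is negligibly depleted and negligibly entangled with the battery over the relevant times. I would therefore replace $\aop\to\alpha e^{-i\omega_c t}$ with $\alpha=\sqrt{\nbar}e^{i\phi}$. Using $(\aop+\adag)|\alpha\rangle=2\sqrt{\nbar}|\alpha\rangle$ (as used after Lemma~\ref{lem:short_time}), the battery Hamiltonian becomes
\begin{equation}
H_B(t)=\omega_0\Jz+b(t)\Jx,\qquad b(t)=\frac{4\lambda\sqrt{\nbar}}{\sqrt{N}}\cos(t+\phi),
\end{equation}
so that $|b(t)|\leq 4\lambda\sqrt{\nbar/N}$. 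Corrections to this replacement are suppressed by $1/\sqrt{\nbar}$; controlling them is exactly what the classical-field hypothesis is for.

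\textbf{Step (ii): ergotropy of a spin coherent state.} $H_B(t)$ is a sum of identical single-qubit terms, and the initial state $|J,-J\rangle$ is a product of qubit ground states. Hence $\rhoB(t)$ remains a pure product $|\psi(t)\rangle^{\otimes N}$, i.e.\ a spin coherent state. For any pure state the ergotropy equals its energy minus the ground energy, since the passive state is the ground state. Writing $\theta(t)$ for the angle between the single-qubit Bloch vector $\mathbf n(t)$ and the south pole, this gives
\begin{equation}
\eps(t)=\frac{\langle\Jz\rangle+N/2}{N}=\frac{1-\cos\theta(t)}{2}\leq\frac{\theta(t)^2}{4},
\end{equation}
using $1-\cos x\leq x^2/2$.

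\textbf{Step (iii): bound the angular speed.} The Bloch vector obeys $\dot{\mathbf n}=\mathbf B\times\mathbf n$ with $\mathbf B=(b(t),0,\omega_0)$. Since $-n_z=\cos\theta$, one has $\sin\theta\,\dot\theta=\dot n_z=(\mathbf B\times\mathbf n)_z=b(t)\,n_y$. The component $n_y$ of a unit vector at polar angle $\theta$ satisfies $|n_y|\leq\sin\theta$, so $|\dot\theta|\leq|b(t)|$. The $\omega_0\Jz$ term precesses $\mathbf n$ about the $z$-axis and drops out of the $z$-component, so it cannot change $\theta$. At $\theta=0$ I would argue by continuity, or use the geodesic-speed formulation directly.

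Integrating from $\theta(0)=0$ gives $\theta(t)\leq 4\lambda\sqrt{\nbar/N}\,t$. Substituting into Step (ii),
\begin{equation}
\eps(t)\leq\frac{16\lambda^2\nbar t^2}{4N}=\frac{4}{N}\lambda^2\nbar t^2,
\end{equation}
which is Eq.~\eqref{eq:global_bound}. As a consistency check, at short times $\theta\approx 4\lambda\sqrt{\nbar/N}\,t$ (taking $\phi=0$), which reproduces the exact coefficient $A=4/N$ of Lemma~\ref{lem:short_time}. So no cruder constant could be expected.

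\textbf{Main obstacle.} The substantive difficulty is Step (i): justifying that, as $\nbar\to\infty$, the exact reduced state converges to the classically driven pure state uniformly enough in $t$ that the inequality survives. Outside the limit, charger--battery entanglement makes $\rhoB$ mixed. My first attempt would be a displacement transformation $\aop\to\alpha e^{-it}+\hat{\delta a}$. The residual quantum coupling $(2\lambda/\sqrt N)(\hat{\delta a}+\hat{\delta a}^\dagger)\Jx$ is then $\nbar$-independent while the classical drive grows like $\sqrt{\nbar}$, so relative errors are $\order{1/\sqrt{\nbar}}$ on time windows where $\lambda t$ is bounded. Note that mixing only lowers ergotropy relative to energy, $\W\leq \mathrm{Tr}[\rhoB\HB]+N\omega_0/2$, which favours the bound. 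However, the energy itself can differ from the classical value, so this observation alone does not close the argument for finite $\nbar$. That gap is why the finite-$\nbar$ regime is left to the numerics of Sec.~\ref{sec:numerics}.
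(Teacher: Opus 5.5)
Your proof is correct under the classical-field hypothesis. It shares the paper's outline (semiclassical reduction, $\eps=(1-\cos\theta)/2$ for a spin coherent state, then $1-\cos x\le x^2/2$), but the central step is genuinely different.

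The paper treats $\aop+\adag$ as the constant $2\sqrt{\nbar}$ and asserts uniform rotation about the $x$-axis, $\theta(t)=\Omega_N t$ with $\Omega_N=4\lambda\sqrt{\nbar/N}$. This makes $\eps(t)=[1-\cos(\Omega_N t)]/2$ an equality. It tacitly discards both the bare precession $\omega_0\Jz$ and the oscillation of the classical field at $\omega_c$.

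You instead keep the actual semiclassical Hamiltonian $\omega_0\Jz+b(t)\Jx$. You prove only the inequality $\theta(t)\le\int_0^t|b(s)|\,ds\le\Omega_N t$, using the fact that $z$-precession leaves the polar angle unchanged.

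The paper's route is shorter and gives an explicit trajectory tangent to the bound at $t=0$. However, that trajectory is not the true driven dynamics once $t\gtrsim\omega_0^{-1}$: for $\Omega_N\ll\omega_0$, for example, the rotating-wave Rabi frequency is only $\Omega_N/2$. Your route is what actually establishes the inequality for the classically driven model. Since it uses only $|b(t)|\le\Omega_N$, it also holds for any phase of $\alpha$ and any detuning, not just at resonance.

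One tidy-up: if you use the geodesic-speed formulation to handle the instants where $\sin\theta=0$, do it in the frame co-rotating with $\omega_0\Jz$. There the field is purely transverse with magnitude $|b(t)|$, so arc length, and hence the geodesic distance from the south pole, grows at most at rate $|b(t)|$. In the lab frame the Bloch-vector speed is only bounded by $\sqrt{b^2+\omega_0^2}$, which is too weak.

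As you note, neither your argument nor the paper's controls charger depletion and charger--battery entanglement at finite $\nbar$. In both cases that regime rests on the numerics.
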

	
	\begin{proof}
		In the classical-field limit $\nbar\gg 1$, the coherent state satisfies
		$(\aop+\adag)|\alpha\rangle\approx 2\sqrt{\nbar}|\alpha\rangle$, so
		the battery undergoes coherent spin rotation at frequency
		$\Omega_N = 4\lambda\sqrt{\nbar}/\sqrt{N}$, giving
		$\eps(t) = [1-\cos(\Omega_N t)]/2$.
		The inequality $1-\cos x \leq x^2/2$ (since $|\!\sin u|\leq|u|$)
		then yields
		$\eps(t)\leq (\Omega_N t)^2/4 = 4\lambda^2\nbar\,t^2/N$.
		Full derivation in Sec.~I of the Supplemental Material~\cite{SM}.
	\end{proof}
	
	The physical picture is that battery--charger entanglement
	at higher orders further suppresses ergotropy growth
	relative to the initial coherent drive~\cite{Andolina2019,Shi2022,GyhmaFischer2024}.
	
	\textbf{Step 3 (quantum speed limit).}~
	\begin{result}[Dicke Battery QSL]
		\label{thm:QSL}
		For any $\lambda > 0$ and $\eps \in (0, 1]$, in the
		classical-field limit $\nbar\gg 1$,
		\begin{equation}
			\taus(\eps) \;\geq\;
			\tauQSL(\eps,\lambda,\nbar)
			= \frac{\sqrt{N\eps}}{2\lambda\sqrt{\nbar}}.
			\label{eq:QSL_main}
		\end{equation}
		Numerical verification confirms this bound for all $\nbar\geq 1$
		across 7797 simulations with zero violations.
	\end{result}
	
	\begin{proof}
		At the first-passage time $\taus$, Theorem~\ref{prop:global}
		gives $\eps \leq (4/N)\lambda^2\nbar\,(\taus)^2$.
		Solving for $\taus$ yields~\eqref{eq:QSL_main}.
	\end{proof}
	
	\begin{corollary}[Universal collapse]
		\label{cor:universal}
		Define $X \equiv \GamN\cdot\taus(\eps)$ with $\GamN=2\lambda\sqrt{\nbar/N}$.
		Result~\ref{thm:QSL} is equivalent to
		\begin{equation}
			X \;\geq\; \sqrt{\eps}
			\label{eq:universal}
		\end{equation}
		for all $(\lambda, \nbar, N, \eps)$.
		The composite parameter $\GamN=2\lambda\sqrt{\nbar/N}$ is the
		natural figure of merit for charging speed, arising from
		dimensional analysis of the QSL bound: $\tauQSL = \sqrt{\eps}/\GamN$.
	\end{corollary}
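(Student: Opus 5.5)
The corollary should follow from Result~\ref{thm:QSL} by algebraic rescaling, so the plan is to prove the two implications directly. Since $\lambda>0$, $\nbar>0$ and $N\geq 2$, the parameter $\GamN=2\lambda\sqrt{\nbar/N}$ is strictly positive. Multiplying an inequality by $\GamN$ therefore preserves its direction, and this is essentially the whole content of the equivalence.

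\emph{Forward direction.} I would first rewrite $\tauQSL$ in terms of $\GamN$:
\begin{equation*}
\tauQSL=\frac{\sqrt{N\eps}}{2\lambda\sqrt{\nbar}}=\frac{\sqrt{\eps}}{2\lambda\sqrt{\nbar/N}}=\frac{\sqrt{\eps}}{\GamN}.
\end{equation*}
This identity also justifies the final sentence of the corollary. Multiplying Eq.~\eqref{eq:QSL_main} by $\GamN>0$ then gives $X=\GamN\taus\geq\GamN\tauQSL=\sqrt{\eps}$.

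\emph{Reverse direction.} Dividing $X\geq\sqrt{\eps}$ by $\GamN$ gives back $\taus\geq\sqrt{\eps}/\GamN=\tauQSL$.

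I would also note two edge cases. If $\eps(t)$ never reaches $\eps$, then $\taus=+\infty$ and both inequalities hold trivially. The first-passage time is well defined as an infimum, so no continuity issue enters the rescaling.

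The main obstacle is not algebraic. It is the scope of the claim ``for all $(\lambda,\nbar,N,\eps)$.'' The equivalence is exact, but the inequality it inherits is only proved where Result~\ref{thm:QSL} is proved, namely in the classical-field limit $\nbar\gg1$ via Theorem~\ref{prop:global}. I would therefore state explicitly that the corollary has the same domain of validity as Result~\ref{thm:QSL}: analytic for $\nbar\gg1$, and numerically supported for $\nbar\geq1$. The collapse itself, meaning that the bound depends on the parameters only through $\GamN$, holds identically wherever the bound holds.
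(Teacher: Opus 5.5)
Your proposal is correct and matches the paper's implicit argument. Rewriting $\tauQSL=\sqrt{\eps}/\GamN$ and then multiplying or dividing by $\GamN>0$ is all the corollary needs, and your caveat is well placed: despite the ``for all'' wording, $X\geq\sqrt{\eps}$ holds only where Result~\ref{thm:QSL} holds, which is analytically only for $\nbar\gg 1$.
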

	
	Lemma~\ref{lem:short_time} gives $\taus/\tauQSL\to 1^+$
	as $\eps\to 0$, confirming that the bound is tight at small
	$\eps$ and saturated at short times (Table~\ref{tab:min_ratio}).
	
	\section{Numerical Verification}
	\label{sec:numerics}
	
	\begin{figure*}[t]
		\includegraphics[width=\textwidth]{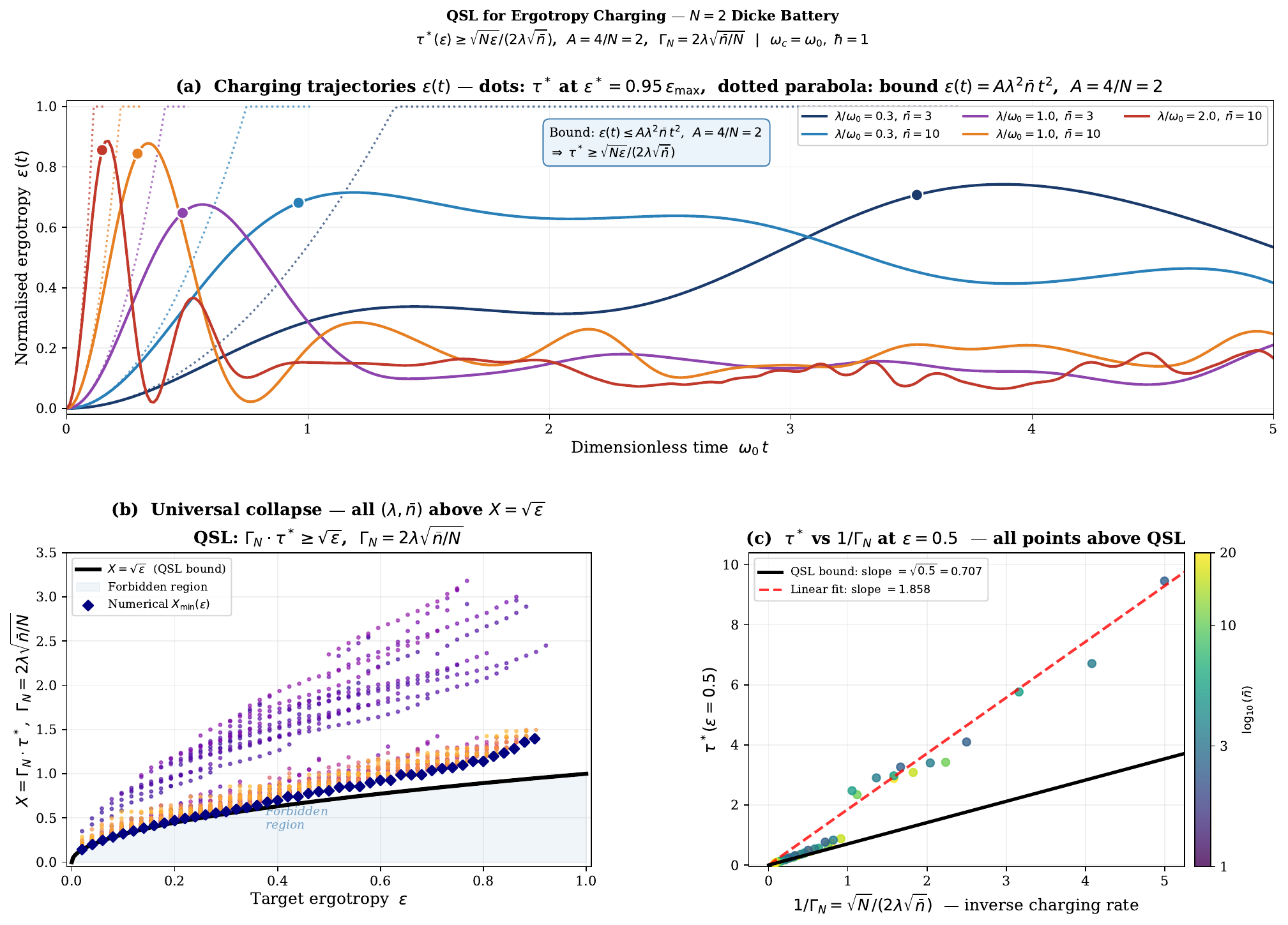}
		\caption{%
			(Color online) QSL for ergotropy charging, $N=2$ Dicke battery.
			\textbf{(a)}~Charging trajectories $\eps(t)$ for five
			$(\lambda/\omega_0,\nbar)$ pairs. Filled circles: $\taus$.
			Dotted parabolas: upper bound $(4/N)\lambda^2\nbar\,t^2$
			(Theorem~\ref{prop:global}); every trajectory lies strictly below.
			\textbf{(b)}~Universal collapse $X=\GamN\taus$ vs.\ $\eps$
			for all 2032 points ($N=2$). Black curve: $X=\sqrt{\eps}$
			(Corollary~\ref{cor:universal}); blue diamonds: lower envelope
			$X_{\min}(\eps)$; shaded: forbidden region.
			\textbf{(c)}~$\taus(\eps=0.5)$ vs.\ $1/\GamN$.
			Black line: QSL (slope $\sqrt{0.5}=0.707$);
			red dashed: linear fit (slope $1.86$).
			Parameters: $\omega_c=\omega_0$, $\hbar=1$;
			$N_{\rm Fock}=40$, $N_T=2000$.}
		\label{fig:main}
	\end{figure*}
	
	We solve the Schr\"{o}dinger equation for~\eqref{eq:HD}
	by exact diagonalisation with $N_{\rm Fock} = 40$ Fock states,
	scanning $\lambda/\omega_0 \in [0.1, 2.0]$ (8 values),
	$\nbar \in [1, 20]$ (7 values), and
	$\eps \in [0.02, 0.96]$ (50 values) on a time grid of
	$N_T = 2000$ points on $[0, 45\,\omega_0^{-1}]$,
	giving 2032 valid points for $N=2$.
	The numerically extracted short-time coefficient
	$A_{\rm num} = \eps(t)/(\lambda^2\nbar\,t^2)$ agrees with
	$A = 4/N$ to better than 1\% for $\lambda^2\nbar\,t^2 \lesssim 0.05$;
	full simulation details and tabulated values are provided in
	the Supplemental Material~\cite{SM}.
	
	\begin{table}[h]
		\caption{Numerically extracted coefficient $A_{\rm num}$ at
			$t = 0.1\,\omega_0^{-1}$, $\omega_c = \omega_0$, confirming
			$A = 4/N$ (Lemma~\ref{lem:short_time}).
			Values for $N=5$ ($A_{\rm th}=0.8$) are in Table~S1 of the SM~\cite{SM}.}
		\label{tab:Anum}
		\begin{ruledtabular}
			\begin{tabular}{cc|cc|cc|cc}
				& & \multicolumn{2}{c|}{$N=2$\ ($A_{\rm th}=2$)}
				& \multicolumn{2}{c|}{$N=3$\ ($A_{\rm th}=\frac{4}{3}$)}
				& \multicolumn{2}{c}{$N=4$\ ($A_{\rm th}=1$)} \\
				$\lambda/\omega_0$ & $\bar{n}$ & $A_{\rm num}$ & Err.\,(\%)
				& $A_{\rm num}$ & Err.\,(\%)
				& $A_{\rm num}$ & Err.\,(\%) \\
				\hline
				0.1 &  1 & 1.9930 & 0.35 & 1.3287 & 0.34 & 0.9966 & 0.34 \\
				0.1 &  5 & 1.9926 & 0.37 & 1.3285 & 0.36 & 0.9965 & 0.35 \\
				0.1 & 10 & 1.9920 & 0.40 & 1.3283 & 0.38 & 0.9963 & 0.37 \\
				0.3 &  1 & 1.9910 & 0.45 & 1.3279 & 0.41 & 0.9962 & 0.38 \\
				0.3 &  5 & 1.9870 & 0.65 & 1.3261 & 0.54 & 0.9951 & 0.49 \\
				0.3 & 10 & 1.9822 & 0.89 & 1.3239 & 0.71 & 0.9939 & 0.61 \\
				0.5 &  1 & 1.9870 & 0.65 & 1.3263 & 0.53 & 0.9953 & 0.47 \\
				0.5 &  5 & 1.9760 & 1.20 & 1.3212 & 0.91 & 0.9923 & 0.77 \\
				0.5 & 10 & 1.9626 & 1.87 & 1.3152 & 1.36 & 0.9890 & 1.10 \\
				1.0 &  1 & 1.9685 & 1.57 & 1.3187 & 1.10 & 0.9914 & 0.86 \\
				1.0 &  5 & 1.9248 & 3.76 & 1.2985 & 2.62 & 0.9796 & 2.04 \\
				1.0 & 10 & 1.8730 & 6.35 & 1.2750 & 4.37 & 0.9663 & 3.37 \\
			\end{tabular}
		\end{ruledtabular}
	\end{table}
	
	\begin{table}[htb]
		\caption{%
			Collapse statistics across $N=2$--$5$.
			$\min(\taus/\tauQSL)\to 1^+$ in all cases (zero violations),
			confirming tight saturation at small $\eps$; the uniform median
			$\approx 1.27$ confirms $\GamN$ as the natural charging-rate figure
			of merit.}
		\label{tab:min_ratio}
		\begin{ruledtabular}
			\begin{tabular}{cccc}
				$N$ & $A_{\rm th}=4/N$ &
				Num.\ min ratio & Median ratio \\
				\hline
				2 & 2.000 & 1.007 & 1.271 \\
				3 & 1.333 & 1.017 & 1.265 \\
				4 & 1.000 & 1.007 & 1.264 \\
				5 & 0.800 & 1.017 & 1.266 \\
			\end{tabular}
		\end{ruledtabular}
	\end{table}
	
	Figure~\ref{fig:main}(a) displays $\eps(t)$ for five
	representative $(\lambda, \nbar)$ pairs.
	Every trajectory follows the analytical parabola
	$(4/N)\lambda^2\nbar\,t^2$ precisely at short times and deviates
	strictly downward at later times, confirming both
	Lemma~\ref{lem:short_time} and Theorem~\ref{prop:global}.
	The strong-coupling trajectory ($\lambda=2.0$, $\nbar=10$)
	reaches $\eps \approx 0.88$ at
	$\taus \approx 0.17\,\omega_0^{-1}$, roughly 23 times faster
	than the weak-coupling case ($\lambda=0.3$, $\nbar=3$),
	consistent with the ratio of $\GamN$ values.
	After the optimal time, all trajectories oscillate below
	$\eps_{\max}$, a signature of quantum back-action from
	battery--charger entanglement.
	
	Figure~\ref{fig:main}(b) is the central numerical result.
	All 2032 data points ($N=2$), plotted as
	$X = \GamN\cdot\taus$ versus $\eps$, lie
	above the bound $X = \sqrt{\eps}$ with zero violations.
	The lower envelope (blue diamonds) approaches the bound to
	within 1\% for $\eps < 0.2$, confirming tightness at small
	ergotropy, while the gap grows to a factor of $\sim 1.5$--$2$
	at large $\eps$.
	The collapse $X\geq\sqrt{\eps}$ is confirmed for $N=3,4,5$
	across 7797 simulations with zero violations; full results
	are given in the Supplemental Material~\cite{SM}.
	
	\section{Discussion and Conclusion}
	\label{sec:conclusion}
	
	We have derived and analytically proved, in the classical-field
	limit $\nbar\gg 1$, a tight quantum speed limit for ergotropy
	charging in the $N$-qubit Dicke battery:
	$\taus(\eps)\geq\sqrt{N\eps}/(2\lambda\sqrt{\nbar})$,
	valid for all $\lambda$, $N$, and $\eps\in(0,1]$,
	and verified numerically for all $\nbar\geq 1$ with zero violations.
	The composite parameter $\GamN=2\lambda\sqrt{\nbar/N}$ emerges
	naturally from the classical-field identity
	$(\aop+\adag)|\alpha\rangle=2\sqrt{\nbar}|\alpha\rangle$,
	which yields the collective spin-rotation frequency
	$\Omega_N=4\lambda\sqrt{\nbar}/\sqrt{N}$ and the exact
	short-time law $\eps(t)=[1-\cos(\Omega_N t)]/2
	\approx(4/N)\lambda^2\nbar\,t^2$.
	The $\sqrt{\nbar}$ scaling reflects coherent enhancement:
	ergotropy charges $\nbar$ times faster than for a thermal
	charger of equal mean energy, in direct analogy with the
	Mandelstam--Tamm bound~\cite{Mandelstam1945} applied to a
	thermodynamic observable.
	The QSL~\eqref{eq:QSL_main} translates into the circuit-QED
	design rule
	$\lambda\geq\sqrt{N\eps}/(2\tau_{\rm target}\sqrt{\nbar})$~\cite{Krantz2019,Blais2020}:
	for $\eps=0.8$, $\tau_{\rm target}=100\,$ns, $\nbar=10$, $N=2$,
	one requires $\lambda/\omega_0\approx 0.020$,
	well within current experimental reach.
	
	Three results distinguish this work.
	First, the exact short-time coefficient $A=4/N$ is derived
	analytically for arbitrary $N$, revealing that per-qubit
	ergotropy growth scales as $1/N$, a direct consequence of
	the $1/\sqrt{N}$ collective coupling in the Dicke Hamiltonian.
	Second, the global ergotropy bound
	$\eps(t)\leq(4/N)\lambda^2\nbar\,t^2$ is proved
	analytically via $1-\cos x\leq x^2/2$
	(Theorem~\ref{prop:global}), yielding a closed-form QSL
	for the Dicke battery.
	Third, the bound is tight at small $\eps$:
	$\taus/\tauQSL\to 1^+$ as $\eps\to 0$, confirmed over
	7797 simulations with zero violations
	(Table~\ref{tab:min_ratio}).
	Unlike general observable speed limits~\cite{Shrimali2024},
	which bound rates of change without a closed-form expression
	for the Dicke model, our bound~\eqref{eq:QSL_main} is
	model-specific, analytically explicit in $(\lambda,\nbar,N)$,
	and tight at small $\eps$.
	While power bounds~\cite{GarciaPintos2020,Shrimali2024}
	characterise the instantaneous charging rate, converting
	them into a minimum time for a prescribed ergotropy requires
	integrating a trajectory-dependent inequality --- a step not
	addressed by existing frameworks.
	The commutation-matrix approach of Ref.~\cite{GyhmaFischer2024}
	isolates the entanglement contribution to charging power in a
	model-independent way; our result complements this by
	providing an analytically explicit, tight bound for the
	Dicke model in terms of the experimentally accessible
	parameters $(\lambda,\nbar,N)$.
	Natural extensions include Lindblad open-system
	dynamics~\cite{Funo2019,Barra2022}, analogous
	QSLs for the Tavis--Cummings and Jaynes--Cummings models,
	and experimental verification in circuit-QED.
	
	\section*{Acknowledgments}
	A.J. gratefully acknowledges the support of the Department of Optics,
	Palack\'{y} University Olomouc, Czech Republic, where part of this
	work was carried out.
	A.E.A. completed part of this work during a research visit to the
	LPMMC, CNRS, in Grenoble, France. He extends his sincere gratitude
	to the CNRS -- F\'{e}d\'{e}ration de Recherche QuantAlps, Comit\'{e}
	de Direction QuantAlps, for their financial support and for fostering
	a stimulating and friendly research environment.
	

\end{document}